\DeclareMathAlphabet{\mathantt}{OT1}{antt}{li}{it}
\DeclareMathAlphabet{\mathpzc}{OT1}{pzc}{m}{it}
\newtheorem{theorem}{Theorem}
\newtheorem{lemma}[theorem]{Lemma}
\DeclareFontFamily{OT1}{pzc}{}
\DeclareFontShape{OT1}{pzc}{m}{it}%
  {<-> s * [1.1] pzcmi7t}{}
\DeclareMathAlphabet{\mathpzc}{OT1}{pzc}%
                     {m}{it}
\DeclareMathOperator{\argmin}{\arg\min}
\title{On Optimal Proactive and Retention-Aware Caching with User Mobility}
\author{Ghafour Ahani~and~\and Di Yuan}
\affil{Department of Information Technology \\ Uppsala University, Sweden\\
Emails:\{ghafour.ahani, di.yuan\}@it.uu.se
}
\begin{document}
\pagenumbering{gobble}
\maketitle
\begin{abstract}
Caching popular contents at edge devices is an effective solution to alleviate the burden of the backhaul networks.  Earlier investigations commonly neglected the storage cost in caching. More recently, retention-aware caching, where both the downloading cost and storage cost are accounted for, is attracting attention. Motivated by this, we address proactive and retention-aware caching problem with the presence of user mobility, optimizing the sum of the two types of costs. More precisely, a cost-optimal caching problem for vehicle-to-vehicle networks is formulated with joint consideration of the impact of the number of vehicles, cache size, storage cost, and content request probability. This is a combinatorial optimization problem. However, we derive a stream of analytical results and they together lead to an algorithm that guarantees global optimum with polynomial-time complexity. Numerical results show significant improvements in comparison to popular caching and random caching.

\end{abstract}
\begin{IEEEkeywords}
Caching, retention time, storage cost, mobility.
\end{IEEEkeywords}

\IEEEpeerreviewmaketitle
\vskip -15pt
\section{Introduction}
The explosive mobile data traffic growth is putting a heavy burden on backhaul links, causing delays in downloading contents. However, a large portion of the mobile traffic is due to duplicate downloads of a few popular contents. Caching technology has been considered as an effective solution to reduce the burden of the backhaul, by storing the contents on edge devices~\cite{XWang2014Cache}. This enables the mobile users to download their requested contents from the nearby devices instead of downloading the contents from the core network. In modeling such scenarios, most of research efforts focused on downloading cost.  However, storing a content may be subject to a cost as well. A storage cost may be due to flash rental cost incurred by cloud service providers or flash damage caused by writing a content to the memory device~\cite{Shukla2017TMC}. In both cases, the storage cost typically depends on the time duration of storage, hereafter referred to as the retention time. Intuitively, with longer retention time, the requested contents can be obtained with higher probability from the cache, thus avoiding the cost of downloading from the network. But longer retention time results in a higher storage cost. Therefore, \emph{what to cache} and \emph{for how long} are both key aspects in optimal caching. Few works in studying optimal caching have considered the impact of storage cost. The works such as~\cite{KPoularakis2016, TDeng2017Cost,taodeng2018} considered only the downloading cost. The study in \cite{KPoularakis2016} proposed an approximation algorithm with performance guarantee for multicast-aware proactive caching. The authors in \cite{TDeng2017Cost} considered cost-optimal caching with user mobility. They presented an extension in \cite{taodeng2018} by providing a linear lower bound of the objective function. In these works, storage cost was neglected.
The study in~\cite{Abedini2014Content} chose to represent storage cost using a random variable. Later, the work in \cite{Schroeder2016Flash} suggested that the storage cost can be better modeled by an increasing linear/convex function. Another limitation of \cite{Abedini2014Content} is that the retention time is fixed. Later, this assumption was relaxed in \cite{Shukla2017Proactive} and the retention time was treated as an optimization variable in a time-slotted system. We remark that in \cite{Shukla2017Proactive}, a user is associated with only one cache. A generalization of a multiple-path routing model with retention-aware caching was studied in \cite{Shukla2017Hold}.

For mobility scenarios, contents are often cached at mobile devices. They can exchange the requested contents when they move into the communication range of each other. Making the best of mobility information between mobile users can significantly improve the caching efficiency~\cite{TDeng2017Cost,RWang2016Mobility,RWang2016}.  However, considering both downloading and storage costs, with presence of user mobility, calls for further research. To this end, our main contributions are as follows.

\begin{itemize}
\item
We formulate a Proactive Retention-Aware Caching Optimization (PRACO) problem with user mobility in a time-slotted system, taking into account both the downloading and storage costs.
\item This problem is a combinational optimization in its nature. However, we provide mathematical analysis in order to facilitate the computation of global optimum time-efficiently, namely,
\begin{itemize}
\item
we first prove that for any content, the optimal caching decisions over the time slots can be derived, given the initial number of mobile users caching the content;
\item
the above analysis is then embedded into a dynamic programming algorithm and we prove it is both globally optimal and of polynomial-time complexity.
\end{itemize}
\item
The numerical results show significant improvements in comparison to two conventional algorithms, namely, popular caching and random caching.
\end{itemize}


\section{System Model and Problem Formulation}
\label{System_model}
\subsection{System Model}

We consider a vehicular network scenario which consists of a content server having all the contents, a number of vehicles, and road side units (RSUs) providing signal coverage for the vehicles. Denote by $\mathcal{R}$ the set of vehicles that are interested in requesting contents, referred to as \emph{requesters}, whose index set is represented by $\mathcal{R}=\{1,2,\dots,R\}$. Denote by $\mathcal{H}$ the set of vehicles that we call \emph{helpers}. Each helper is equipped with a cache of size $s$, that can supply the requesters with contents from the cache, and therefore to mitigate backhaul congestion. We consider a library of $C$ contents, whose index set is $\mathcal{C}=\{1,2,\dots,C \}$. The sizes of all the contents are the same and are assumed to be one. In addition, each content is either fully stored or not stored at all at a helper. Figure \ref{fig:systemmodel} shows the system scenario.

\begin{figure}[ht!]
\centering
\includegraphics[scale=0.4]{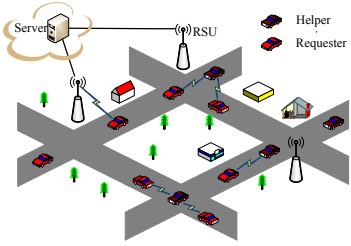}
\vspace{-3mm}
\begin{center}
  \caption{System scenario.}
  \label{fig:systemmodel}
  \vspace{-3mm}
\end{center}
\end{figure}
\vspace{-5mm}

The event that vehicles move into the transmission range of each other is called a \emph{contact}, during which communication between them can occur.  We consider that the contact between any two vehicles follows a Poisson distribution. Poisson distribution can characterize the mobility pattern of vehicles as the tail behavior of the inter-contact time distribution can be characterized as an exponential distribution by analyzing the real-world vehicle mobility traces, see~\cite{Zhu2010Recognizing}.  Here, we assume a homogeneous contact rate for all the vehicles, denoted by $\lambda$. This assumption is in fact common~\cite{Spyropoulos2008Efficient,Zhang2007Performance}.
As a consequence, it is not necessary to explicitly consider the content cached by each helper, as there is no difference between the helpers in the perspective of the requesters. Thus, the caching performance is fully determined by the number of helpers for each content. Moreover, it is obvious that a content is cached by no more than $H$ helpers. Therefore, in modeling cache capacity, it is sufficient to constrain that the total number of the cached contents of all the helpers does not exceed $\emph{S}=sH$.

We consider a time-slotted system where each time slot\footnote{Note
that the time duration of a slot is different from that in LTE. Here,
the order of magnitude in performance evaluations is hour.} is of duration $\delta$. The time period subject to optimization consists of $T$ time slots. In each slot, all the requesters are active and ask for some content. Also, no requester becomes helper in the next slots or vice versa. Each requester has its content request probabilities, of which the distribution is independent of time slot. The probability that content $c$ is requested by requester $r$ is denoted by $w_{rc}$, with $\sum_{c \in \mathcal{C}}w_{rc}=1$.
The contents, if cached by the helpers, are fetched at the beginning of the time period. When a requester asks for a content in a slot, the requester will first try to collect the content from the encountered helpers. If the requester fails to obtain the content at the end of this slot, it downloads the content from the server. In the latter case, a downloading cost is incurred.

\subsection{Cost Model}

Denote by $\bm{x}$ our caching decision which is a $C\times T$ matrix
for the $C$ contents and the $T$ slots.  The entry at location $(c,t)$, i.e., $x_{ct}$, denotes the number of helpers storing content $c$ in slot $t$, $x_{ct} \in \{0,1,\dots,H\}$.  The caching optimization process, applied at the beginning of the time period, will determine the number of helpers for each content as well as the
retention time. The latter is represented by the number of helpers
over the time slots, and this number either remains or decreases from
one time slot to the next.

Downloading a content from the server results in a
downloading cost. Also, caching a content in a helper has a
storage cost due to storage rental cost and flash memory damage. Same as~\cite{Shukla2017Proactive} and~\cite{Shukla2017Hold}, we neglect
the cost of the helpers to fill their caches at the beginning of the
time period. In addition, for the requesters, the downloading cost
from helpers is negligible in comparison to that from
the server.  Therefore, the total cost consists of the downloading
cost from the server for the requesters and the storage cost for the
helpers.

Denote by $f(t)$ the storage cost due to storing a content in a
helper's cache in slot $t$. A longer retention time needs a higher
threshold voltage, which results in a higher memory damage and
consequently gives a higher storage cost, for more detailed
discussions, see~\cite{Shukla2017TMC}. Motivated by this, we assume
that $f(t)$ is an increasing function.

When content $c$ is requested by requester $r$ in slot $t$, the
probability that the requester has to download the content from the
server is denoted by $p_{crt}$.  If $r$ does not meet any helper
having $c$ within the slot, the only way of obtaining $c$ is to
download from the server. As the contacts between the users follows a
Poisson distribution, $p_{crt}$ is given by:

\vspace{-3mm}
\[
p_{crt}=e^{-x_{ct}\lambda \delta}
\]
\vspace{-5mm}

Thus, the total cost, denoted by $\text{Cost}(\bm{x})$, reads as:

\vspace{-5mm}
\begin{equation}
\begin{aligned}
\text{Cost}(\bm{x})=\underbrace{\sum_{r\in \mathcal{R}}\sum_{c \in \mathcal{C}} \sum_{t=1}^{T}w_{rc}p_{crt}}_{\text{downloading cost}}+\alpha\underbrace{ \sum_{c\in \mathcal{C}}\sum_{t=1}^{T}f(t)x_{ct}}_{\text{storage cost}},
\end{aligned}
\end{equation}
\vspace{-1mm}
where $\alpha$ is the weighting factor of the two cost types.

\subsection{Problem Formulation}

The proactive retention-aware caching optimization (PRACO) problem can be formulated as~\eqref{eq:formulation}.

\vspace{-2mm}
\begin{figure}[!h]
\vskip -10pt
\begin{subequations}
\begin{alignat}{2}
\quad &
\min\limits_{\bm{x}}\quad  \text{Cost}(\bm{x})
\label{F_e} \\
\text{s.t}. \quad
& \underset{c\in \mathcal{C}}\sum x_{ct} \le S,~\forall t\in \{1,2,\dots,T\} \label{totalcachesize}\\
& x_{ct}\in \{0,1,2,\dots,H\},c\in \mathcal{C},t\in \{1,2,\dots,T\}
\end{alignat}
\label{eq:formulation}
\vskip -20pt
\end{subequations}
\end{figure}

Constraints~(\ref{totalcachesize}) guarantees that the total number of
stored contents in each time slot does not exceed the total cache
capacity, i.e., $S$. The formulation does not explicitly require the
number of helpers of any content does not increase over time. This
aspect is analyzed later on in Section~\ref{sec:analysis}.

\section{Problem Analysis}
\label{sec:analysis}

We prove that for each content, the optimal number of helpers caching the content decreases over the time slots.  Next, we present an algorithm, which, with respect to the possible initial numbers of helpers of a content, computes the optimal number of helpers for this content over time. We then prove that the algorithm's optimality and its polynomial time complexity.

\begin{lemma}
\label{Lemma1}
For any content c and time slot $t$, if $k$ helpers minimizes the
total cost for $t$, then for any $t^\prime>t$, the total cost of using
$k^\prime>k$ helpers is higher than using $k$ helpers.
\end{lemma}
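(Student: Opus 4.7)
The plan is to isolate the contribution to the objective coming from content $c$ at a single slot, and then to exploit the separation between the downloading and storage terms. Writing $A_c = \sum_{r\in\mathcal{R}} w_{rc}$, the slot-$t$ cost attributable to content $c$ with $x$ helpers is
\[
g_t(x) \;=\; A_c\, e^{-x\lambda\delta} \;+\; \alpha f(t)\, x .
\]
The crucial structural observation is that the downloading summand depends on $x$ but not on $t$, whereas the storage summand factors into a purely time-dependent coefficient $\alpha f(t)$ multiplying a purely $x$-dependent factor. This decoupling is the engine of the whole argument.

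I would then form the difference $g_{t'}(k') - g_{t'}(k)$ and, by adding and subtracting $A_c\bigl(e^{-k'\lambda\delta} - e^{-k\lambda\delta}\bigr)$, rewrite it as
\[
g_{t'}(k') - g_{t'}(k) \;=\; \bigl[g_t(k') - g_t(k)\bigr] \;+\; \alpha\bigl(f(t') - f(t)\bigr)(k' - k).
\]
By hypothesis $k$ is a minimizer of $g_t$, so the first bracket is nonnegative. Since $f$ is increasing (as stated in the cost model) and $k' > k$, the second term is strictly positive. Adding the two yields $g_{t'}(k') > g_{t'}(k)$, which is exactly the claim.

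The only conceptual step that I expect to require care is a precise reading of ``$k$ helpers minimizes the total cost for $t$'': I would interpret this as per-content, per-slot optimality, i.e.\ $k \in \argmin_{x\in\{0,1,\ldots,H\}} g_t(x)$, which makes the statement clean and decouples it from the capacity constraint~\eqref{totalcachesize}. Under this reading, no convexity of $g_t$ is needed for the argument itself; however, $g_t$ happens to be strictly convex in $x$ (since $g_t''(x) = A_c(\lambda\delta)^2 e^{-x\lambda\delta} > 0$), a fact that is likely to be reused when the single-slot analysis is embedded into the dynamic program promised in Section~\ref{sec:analysis}. Strictness in the conclusion hinges on $f$ being strictly increasing, matching the paper's modeling assumption; if $f$ were only non-decreasing, the conclusion would weaken to $g_{t'}(k') \ge g_{t'}(k)$, which would still support the ``non-increasing over time'' corollary that the authors want to draw.
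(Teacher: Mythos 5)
Your proposal is correct and follows essentially the same route as the paper: both isolate the per-content, per-slot cost $\sum_{r\in\mathcal{R}} w_{rc}e^{-x\lambda\delta}+\alpha f(t)x$, use minimality of $k$ at slot $t$ to control the downloading-cost difference, and use monotonicity of $f$ to transfer the comparison to slot $t'$; your exact identity $g_{t'}(k')-g_{t'}(k)=[g_t(k')-g_t(k)]+\alpha\bigl(f(t')-f(t)\bigr)(k'-k)$ is just a compact repackaging of the paper's two chained inequalities. If anything, your handling of strictness is slightly more careful than the paper's, since you take the first bracket only as nonnegative and draw strictness from the $f$-difference term, whereas the paper's inequality (4) asserts a strict gap that formally requires $k$ to be the unique minimizer.
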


\begin{proof}
The total cost for content $c$ and time slot $t$ is:

\vspace{-2mm}
\begin{equation}
\label{delta}
\begin{aligned}
\Delta(x_{ct})=\sum_{r\in \mathcal{R}}w_{rc}e^{-x_{ct}\lambda\delta}+\alpha f(t) x_{ct}
\end{aligned}
\end{equation}
\vspace{-2mm}

As the minimum of $\Delta(x_{ct})$ occurs for $x_{ct}=k$, we have:
\vspace{-1mm}
\begin{equation}
\label{refeq11}
\vspace{-2mm}
\begin{aligned}
\sum_{r\in \mathcal{R}}w_{rc}e^{-k\lambda\delta}-\sum_{r\in \mathcal{R}}w_{rc} e^{-x^\prime_{ct}\lambda\delta}<\alpha f(t)(x^\prime_{ct}-k),
\end{aligned}
\end{equation}

for $x^\prime_{ct}>k$.
Also, $f(t)$ is an increasing function. Thus:

\begin{equation}
\label{refeq1}
\vspace{-2mm}
\begin{aligned}
\alpha f(t)(x^\prime_{ct^\prime}-k)<\alpha f(t^\prime)(x^\prime_{ct^\prime}-k),
\end{aligned}
\end{equation}
\vspace{-0mm}

for $x^\prime_{ct^\prime}>k$. From (\ref{refeq11}) and (\ref{refeq1}), we obtain:
\begin{equation}
\vspace{-2mm}
\label{bla}
\begin{aligned}
\sum_{r\in \mathcal{R}}w_{rc} e^{-k\lambda\delta}-\sum_{r\in \mathcal{R}}w_{rc} e^{-x^\prime_{ct^\prime}\lambda\delta}<\alpha f(t^\prime)(x^\prime_{ct^\prime}-k),
\end{aligned}
\end{equation}

for $x^\prime_{ct^\prime}>k.$
By rearranging the terms of \eqref{bla},
we have $\Delta(x_{ct^\prime})<\Delta(x^\prime_{ct^\prime})$
for $x_{ct^\prime}=k$ and $x^\prime_{ct^\prime}>k$.
\end{proof}

Next, we prove that for a content, if the initial number of helpers is
given, the optimum
can be obtained in polynomial
time.
The procedure is in
Algorithm~\ref{alg1}, see Lines~\ref{linegreedy1} and
\ref{linegreedy2}.

\begin{algorithm}
\caption{Optimization for given initial conditions}
\label{alg1}
\begin{algorithmic}[1]
\REQUIRE $C$, $H$, $T$
\ENSURE $\mathbf{z}$

\FOR{$k$ = $1$~to~$C$}
\STATE $\mathbf{z_k} \leftarrow [0]_{(H+1)}$ \label{initial}
 \FOR{$x_{k1}$ = $0$~to~$H$}
 \STATE $x^*_{k1}\leftarrow$ $x_{k1}$
   \STATE ${z}_k(x_{k1})\leftarrow$ $\Delta(x_{k1})$

   \FOR{$t$ = $2$~to~$T$} \label{linegreedy1}
\STATE $x^*_{kt} \leftarrow \underset{x_{kt}\in \{0,1,\dots,x^*_{k(t-1)}\}} {\argmin \{\Delta(x_{kt})\}}$ \label{linegreedy2}
\STATE $z_k(x_{k1}) \leftarrow z_k({x_{k1}})+\underset{x_{kt}\in \{0,1,\dots,x^*_{k(t-1)}\}} {\min \{\Delta(x_{kt})\}~~~~~}$ \label{linegreedy3}
\ENDFOR
\ENDFOR
\ENDFOR
\RETURN $\mathbf{z}$
\end{algorithmic}
\end{algorithm}
\vspace{-2mm}

In the algorithm, for each content $k\in \mathcal{C}$, a vector $\mathbf{z_k}$ of size
$(H+1)$ is used to store the optimal cost for all
possible initial numbers of helpers. Line $3$ considers all possible
initial numbers of helpers in the range $[0,H]$.  Lines
\ref{linegreedy1} and \ref{linegreedy2} compute the optimal values of
$x_{c2},\dots,x_{cT}$, denoted by $x^*_{c2},\dots,x^*_{cT}$
respectively, for given $x_{c1}$. The computation is of complexity
$O(HTR)$. For any content $c$ and any possible initial number of
helpers $h$, the optimal cost over all the slots are saved in
${z_c(h)}$. The overall complexity of Algorithm~\ref{alg1} is
$O(\max \{CH^2T,CTHR\})$, given $\Delta$ computed.

Note that Lines~\ref{linegreedy1} and~\ref{linegreedy2} are greedy by
construction. Namely, for time slot $t$, Line~\ref{linegreedy2}
determines the number of helpers that minimizes the cost of that
specific time slot. Even though this is intuitive, it is not obvious
that the greedy choice is globally optimal for the given initial
number of helpers. The optimality analysis is formalized in
Lemma~\ref{lemma2}.

\begin{lemma}\label{lemma2}
For any content $c \in \mathcal{C}$, if $x_{c1}$ is given, then
the optimal values of $x_{ct}$ i.e., $x^*_{ct}$, $t=\{2,3,\dots,T\}$
are computed via Lines \ref{linegreedy1} and \ref{linegreedy2} in
Algorithm~$1$ in polynomial time.
\end{lemma}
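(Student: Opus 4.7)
The plan is to exploit the separability of $\text{Cost}(\bm{x})$ across time slots (for fixed $c$, the content's contribution is $\sum_{t=1}^{T}\Delta(x_{ct})$) together with the monotonicity of the unconstrained per-slot minimizers of $\Delta$ supplied by Lemma~\ref{Lemma1}. First I would observe that, for each fixed $c$ and $t$, $\Delta(\cdot)$ is strictly convex: the term $\sum_{r\in\mathcal{R}}w_{rc}e^{-x\lambda\delta}$ is strictly convex in $x$ and $\alpha f(t)x$ is linear. Hence $\Delta$ admits a unique integer minimizer $k^*_t\in\{0,\dots,H\}$, is strictly decreasing on $\{0,\dots,k^*_t\}$, and strictly increasing on $\{k^*_t,\dots,H\}$. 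Lemma~\ref{Lemma1} then yields $k^*_1\ge k^*_2\ge\cdots\ge k^*_T$.

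Next I would characterize the output of Lines~\ref{linegreedy1}--\ref{linegreedy2} by a short induction on $t$: $x^*_{ct}=\min\{k^*_t,x^*_{c(t-1)}\}$ for every $t\ge 2$. If $k^*_t\le x^*_{c(t-1)}$, the unconstrained minimizer lies in the feasible range $\{0,\dots,x^*_{c(t-1)}\}$ and is selected; otherwise $\Delta$ is strictly decreasing on the entire feasible range and the arg-min is its right endpoint $x^*_{c(t-1)}$. Combined with the monotonicity of $k^*_t$, this produces an explicit two-phase description of the greedy output: set $s^\star=\min\{s\ge 2:k^*_s\le x_{c1}\}$ (with $s^\star=T+1$ if no such $s$ exists); then $x^*_{cs}=x_{c1}$ for $2\le s<s^\star$ and $x^*_{cs}=k^*_s$ for $s^\star\le s\le T$.

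Global optimality then falls out slot-by-slot. The model (caches are filled only at the start of the period, so the number of helpers can only decrease over time) forces any feasible alternative $\bar{\bm{x}}_c$ to satisfy $x_{c1}\ge\bar x_{c2}\ge\cdots\ge\bar x_{cT}$, so $\bar x_{cs}\le x_{c1}$ for every $s$. For $s<s^\star$, the inequality $x_{c1}<k^*_s$ places both $\bar x_{cs}$ and $x^*_{cs}=x_{c1}$ in the strictly-decreasing portion $\{0,\dots,k^*_s-1\}$ of $\Delta$, so $\Delta(\bar x_{cs})\ge\Delta(x_{c1})=\Delta(x^*_{cs})$. For $s\ge s^\star$, $x^*_{cs}=k^*_s$ is the global minimizer of $\Delta$, so $\Delta(\bar x_{cs})\ge\Delta(x^*_{cs})$ is immediate. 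Summing yields $\sum_{s=2}^{T}\Delta(\bar x_{cs})\ge\sum_{s=2}^{T}\Delta(x^*_{cs})$, which establishes optimality.

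I expect the main obstacle to be the two-phase characterization: once that structure is isolated, the per-slot optimality inequality drops out cleanly and one avoids messy exchange arguments comparing the greedy and alternative sequences at intermediate slots. The polynomial-time bound is straightforward accounting: for each initial value, the inner loop scans at most $H+1$ candidate values per slot with an $O(R)$ evaluation of $\Delta$, giving $O(HTR)$ work per initial value; sweeping over all $H+1$ initial values yields $O(H^2TR)$ per content, matching the complexity claim in the text.
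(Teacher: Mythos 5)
Your proof is correct, but it follows a genuinely different route from the paper's. The paper proves optimality of the greedy choices by an exchange argument: it supposes a feasible (monotonically non-increasing) sequence with lower total cost exists, then modifies it on maximal blocks where it exceeds, or first falls below, the greedy sequence, and derives a contradiction from Lemma~\ref{Lemma1} and the per-slot optimality of the greedy choice. You instead exploit the discrete unimodality of $\Delta$ and the monotonicity $k^*_2\ge k^*_3\ge\cdots\ge k^*_T$ of the unconstrained per-slot minimizers (which indeed follows from Lemma~\ref{Lemma1}) to give an explicit two-phase description of the greedy output, and then show that it attains, in every slot $s$, the lower bound $\min_{0\le x\le x_{c1}}\Delta(x)$ that is valid for every competitor satisfying $\bar x_{cs}\le x_{c1}$; summing over slots finishes the proof. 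What each approach buys: yours yields a sharper structural characterization of the optimal retention profile (cap at $x_{c1}$, then follow the unconstrained minimizers) and avoids the case analysis over where two sequences differ, while the paper's exchange argument uses only Lemma~\ref{Lemma1} and needs no convexity or unimodality of $\Delta$. Two small points to tighten. First, strict convexity of $\Delta$ over the reals does not guarantee a \emph{unique} integer minimizer: two adjacent integers can be tied. This is harmless --- take $k^*_t$ to be, say, the smallest minimizer (Lemma~\ref{Lemma1} still forces the sequence of minimizers to be non-increasing, and only the attained minimum value enters your per-slot bound) --- but the uniqueness claim as stated is not accurate. Second, your restriction of the comparison class to non-increasing sequences bounded by $x_{c1}$ is exactly the feasibility notion the paper's proof uses implicitly (the greedy output is not optimal against sequences allowed to increase), so it is worth stating, as you do, that this restriction comes from the system model in which caches are filled only at the start of the period.
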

\begin{proof}
For any $c$, consider $x^*_{c2},\dots,x^*_{cT}$, the numbers of
helpers over the time slots returned by Algorithm~$1$ when $x_{c1}$ is
given. Consider another sequence $x^\prime_{c2},\dots,x^\prime_{cT}$
that differs from the first sequence and offers a lower total cost.
First consider the case that $x^*_{ct}<x^\prime_{ct}$ for some $t \in
\{2,\dots,T\}$, and $x^*_{ct}$ remains smaller than the values of the
second sequence in consecutive time slots until time slot $t+n$ where
$0 \le n \le T-t$. That is, the second sequence has elements
$x^\prime_{ct},~x^\prime_{c(t+1)}, \dots,~x^\prime_{c(t+n)}$ all being
greater than $x^*_{ct}$, whereas for slot $t+n+1$,
$x^\prime_{c(t+n+1)}\le x^*_{ct}$. Consider changing all of
$x^\prime_{ct}$, $x^\prime_{c(t+1)}$, \dots, $x^\prime_{c( t+n)}$ to
$x^*_{ct}$ in sequence two, while keeping the values of all other time
slots of this sequence. The updated sequence is feasible because
$x^\prime_{c(t+n+1)}\le x^*_{ct}$. Thus, monotonicity remains for the
updated sequence. The update reduces the cost of the second sequence
by Lemma~$1$, hence a contradiction. A
special case is $t+n=T$, for which $t+n+1$ does not exist. However the
same update and conclusion apply. One case remains, namely there is
no time slot $t$ with $x^*_{ct}<x^\prime_{ct}$, yet sequence two is
different from sequence one. In other words, $x^*_{c2}\ge
x^\prime_{c2}$, \dots, $x^*_{cT}\ge x^\prime_{cT}$.  Let $t$, $t \in
\{2,\dots,T\}$, be the first time slot with strict inequality, i.e.,
$x^*_{ct}>x^\prime_{ct}$. Such a time slot must exist, because,
otherwise the two sequences coincide. Consider increasing
$x^\prime_{ct}$ to $x^*_{ct}$. Sequence two remains feasible in terms of
being monotonically decreasing, because $x^\prime_{ct}\le
x^\prime_{c(t-1)}$ after setting $x^\prime_{ct}$ to $x^*_{ct}$ as
$x^*_{ct}\le x^*_{c(t-1)}=x^\prime_{c(t-1)}$. The cost of $t$, due to the
update, becomes lower because when $t$ is considered by the algorithm,
$x^*_{ct}$ is the optimum. Therefore in this case the second sequence
cannot be better either. Hence the result.
\end{proof}

By Algorithm~1, $x^*_{ct}$, $t=2,3,\dots,T$, can be computed if
$x_{c1}$, $c\in \mathcal{C}$, is given. Consequently, solving PRACO is
equivalent to finding the optimal values of $x_{c1}$, $c\in
\mathcal{C}$. We drop the second subscript and use $x_c, c\in
\mathcal{C}$ as optimization variables for the initial numbers of
helpers, and reformulate PRACO as follows. The cost of $x_{c}$, i.e.,
$z_c(x_{c})$ is from Algorithm~1. Constraint (\ref{const:capacity})
models the cache capacity.

\vspace{-2mm}
\begin{figure}[!ht]
\vskip -10pt
\begin{subequations}
\begin{alignat}{2}
\quad &
\min\limits_{x_{c}}\quad \sum_{c\in \mathcal{C}} z_c(x_{c}){x_{c}}
\label{F_e2} \\
\text{s.t}. \quad
&\sum_{c\in \mathcal{C}}{x_{c}}\le S \label{const:capacity}\\
& x_{c}\in \{0,1,\dots,H\}, c \in \mathcal{C}
\end{alignat}
\label{eq:reformulate}
\vskip -20pt
\end{subequations}
\end{figure}
\vspace{-5mm}

\section{The overall Algorithm and Optimality}\label{sec:DPalg}

\subsection{Dynamic Programming}

We use dynamic programming (DP) to obtain the
optimal values of $x_{c}$, $c\in \mathcal{C}$.
Denote by $a^*(k,i)$ the cost of optimal caching of the first
$k$ contents with a total cache capacity of $i$ units.
Thus, by definition, $a^*(C, S)$ is the overall optimal
cost. The values of $a^*(k,i), k=1,\dots,C, i=0,\dots, S$,
submit to recursion, as formalized in the lemma below.



\begin{lemma}\label{Lemma_Recursive}
The value of $a^*(k,i)$ can be derived from the recursive
formula shown in (\ref{RecursiveFunction1}) for $k=2, \dots, C$, with:
\end{lemma}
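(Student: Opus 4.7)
The plan is to prove the recursion by induction on the number of contents $k$, invoking the principle of optimality from dynamic programming together with Lemma~2.

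First, I would confirm the form of the expected recursion. Based on the reformulation in (\ref{eq:reformulate}) and the definition of $a^*(k,i)$ as the optimal cost of caching the first $k$ contents subject to capacity $i$, I expect the relation
\begin{equation*}
a^*(k,i) \;=\; \min_{0 \le x_k \le \min\{H,i\}} \bigl\{\, a^*(k-1,\, i-x_k) + z_k(x_k) \,\bigr\},
\end{equation*}
with base case $a^*(1,i) = \min_{0 \le x_1 \le \min\{H,i\}} z_1(x_1)$ and the convention that infeasible states evaluate to $+\infty$.

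Next, I would settle the base case $k=1$: a single content faces no inter-content interaction, the only feasible choices for $x_1$ are $\{0, 1, \ldots, \min\{H,i\}\}$, and by definition $z_1(x_1)$ is the optimum horizon-wide cost when that initial level is used. Hence the base case formula is immediate.

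For the inductive step, I would assume the formula holds for $k-1$ and reason as follows. By Lemma~2, once the initial number of helpers $x_{c,1}$ is fixed for every content $c$, the remaining caching decisions $x_{c,2}, \ldots, x_{c,T}$ are optimally and independently determined by Algorithm~1, contributing exactly $z_c(x_{c,1})$ to the total cost. Consequently, the cost in (\ref{eq:reformulate}) is additively separable across contents, and the only cross-content coupling is through the cache-capacity constraint (\ref{const:capacity}). Conditioning on the value assigned to content $k$, say $x_k \in \{0, \ldots, \min\{H,i\}\}$, the residual problem over contents $1, \ldots, k-1$ is exactly the same form with the tighter capacity $i - x_k$; by the inductive hypothesis its minimum equals $a^*(k-1, i-x_k)$. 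Minimizing the sum of the two contributions over $x_k$ yields the claimed recursion; in both directions the argument is a standard Bellman exchange (any candidate $x_k$ is upper-bounded by the recursion, and the recursion's minimizer is achievable).

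The main obstacle I anticipate is not the knapsack recursion itself, which is textbook, but making the reduction from the original $C \times T$ trajectory problem to the one-dimensional knapsack form watertight. That reduction rests on two ingredients that must be invoked explicitly: Lemma~2, which collapses each content's full trajectory to the single variable $x_{c,1}$ with cost $z_c(x_{c,1})$, and the additive separability of $\text{Cost}(\bm{x})$ across contents. Once both are in place, the recursion follows by the principle of optimality applied in the natural order of the content index.
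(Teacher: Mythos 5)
Your proof is correct and follows essentially the same route as the paper: induction on the content index, with the base case $k=1$ immediate and the inductive step conditioning on $x_k$, invoking Lemma~\ref{lemma2} so that $z_k(x_k)$ is the optimal per-content cost, and applying the principle of optimality to the residual capacity $i-x_k$. Your additional remarks on additive separability and the two-directional Bellman exchange merely make explicit what the paper's proof leaves implicit.
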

\vspace{-5mm}
\begin{equation}
a^*(1, i) =\underset{x_1\in\{0,1,\dots,\min\{i,H\}\}}  {\min\{z_1(x_{1})\},~~~~~~}
\end{equation}
\vspace{-3mm}
\begin{equation}
\label{RecursiveFunction1}
a^*(k,i)= \underset{x_k\in\{0,1,\dots,\min\{i,H\}\}~~~~~~~~~~~~~~~~~} {\min\{z_k(x_{k})+a^*(k-1,i-x_{k})\}}
\end{equation}

\begin{proof}
We use induction.  For $k=1$, the result is obvious for any $i=0, 1,
\dots, H$. Suppose that $a^*(k,i)$ is the optimal value for some
$k$, with $i$ in any range of interest.
By (\ref{RecursiveFunction1}), we have:

\vspace{-6mm}
\[
a^*(k+1,i)=\underset{x_{k+1}\in\{0,1,\dots,\min\{i,H\}\}~~~~~~~~~~~~~~~~~} {\min\{z_{k+1}(x_{k+1})+a^*(k,i-x_{k+1})\}}
\]
For $k+1$, the initial number of helpers $x_{k+1}$ must be
one of the values in $\{0,1,\dots,\min\{i,H\}$. For any value of
$x_{k+1}$, $z_{k+1}(x_{k+1})$ is the optimal cost for content $k+1$
(Lemma~\ref{lemma2}), and the corresponding cache capacity for contents
up to $k$ is $i-x_{k+1}$. For the latter,
$a^*(k,i-x_{k+1})$ is optimal. These, together with the $\min$-operator
give the optimum for $k+1$.
\end{proof}

\subsection{Algorithm Description and Optimality}
Algorithm~\ref{alg2} describes the DP approach. The input parameters
consist of $\mathbf{z}$, $C$, $S$, and $H$. Here, $\mathbf{z}$ is from
the output of Algorithm \ref{alg1}. Apart from $\bm{a}^*$ as defined
earlier, $\bm{b}^*$ is used to store the optimal caching solution.
Lines $3$-$10$ compute $a^*(k,i)$ and $b^*(k,i)$ for $k<C$, whereas
Lines $12$ and $13$ compute $a^*(C,S)$ and $b^*(C,S)$ for $k=C$.
Finally, $\bm{b}^*$ is mapped to optimal values of $\bm{x}$, denoted by $\bm{x^*}$, using Lines $14$-$20$.

\begin{algorithm}
\caption{The DP algorithm}
\label{alg2}
\begin{algorithmic}[1]
\REQUIRE $\mathbf{z}$, $C$, $S$, $H$
\ENSURE $\bm{x^*}$
\STATE $\bm{a^*} \leftarrow [0]_{C\times(S+1)}$,~$\bm{b^*} \leftarrow [0]_{C\times(S+1)}$,~$\bm{x^*}\leftarrow [0]_{C}$ \label{ini} 
\FOR{$k=1$ : $C$}
\IF {$k<C$}
\FOR{$i=0$ : $S$}
\IF {$k=1$}
\STATE $a^*(1,i)\leftarrow\underset{x_1\in\{0,1,\dots,\min\{i,H\}\}}{\min\{z_1({x_{1}})\}~~~~~~~}$
\STATE  $b^*(1,i) \leftarrow\underset{x_1\in\{0,1,\dots,\min\{i,H\}\}}{\argmin\{z_1({x_{1}})\}~~~}$
\ELSE
\STATE $a^*(k,i) \leftarrow \underset{x_k\in\{0,1,\dots,\min\{i,H\}\}~~~~~~~~~~~~~~~~} {\min\{z_k({x_{k}})+a^*(k-1,i-x_k)\}}$
\vspace{-2mm}
\STATE $b^*(k,i)~\leftarrow~\underset{x_k\in\{0,1,\dots,\min\{i,H\}\}~~~~~~~~~~~~~~~~~~~~~} {\argmin\{z_k({x_{k}})+a^*(k-1,i-x_k)\}}$
\ENDIF
\ENDFOR
\ELSE
\STATE $a^*(C,S)~\leftarrow~\underset{x_C\in\{0,1,\dots,H\}~~~~~~~~~~~~~~~~~~~~~~~~~~~}{\min\{z_C({x_{C}})+a^*(C-1,S-x_C)\}}$
\vspace{-2mm}
\STATE $b^*(C,S)~\leftarrow~\underset{~x_C\in\{0,1,\dots,H\}~~~~~~~~~~~~~~~~~~~~~~~~~~~~~~~~}{\argmin\{z_C({x_{C}})+a^*(C-1,S-x_C)\}}$
\ENDIF
\ENDFOR
\FOR{$k=C$ : $1$}
\IF {$k=C$}
\STATE ${x^*_{C}} \leftarrow {b^*(C,S)}$
\STATE $e \leftarrow S-b^*(C,S)$
\ELSE
\STATE ${x^*_{k}} \leftarrow {b^*(k,e)}$
\STATE $e \leftarrow e-b^*(k,e)$
\ENDIF
\ENDFOR
\RETURN $\bm{x^*}$
\end{algorithmic}
\end{algorithm}
\vspace{-2mm}
\begin{theorem}\label{theorem_dp}
Algorithm~\ref{alg2} delivers the global optimum of PRACO in polynomial time.
\end{theorem}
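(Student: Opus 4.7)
The plan is to establish optimality by chaining together the two levels of optimization already prepared in the excerpt, and then to obtain polynomial complexity by inspecting the nested loops of Algorithms~1 and~2. Concretely, I would first invoke Lemma~\ref{lemma2} to reduce PRACO to the reformulation \eqref{eq:reformulate}: for every content $c$ and every admissible initial value $x_c \in \{0,1,\dots,H\}$, the entry $z_c(x_c)$ returned by Algorithm~1 is precisely the optimal cost contribution of content $c$ over all $T$ time slots, because Lines~\ref{linegreedy1}--\ref{linegreedy2} compute the globally optimal per-slot decisions given $x_{c1}=x_c$. Consequently, any feasible $\bm{x}^*$ for PRACO can be identified with a choice of $(x_1,\dots,x_C)$ satisfying $\sum_{c} x_c \le S$, and the PRACO cost of $\bm{x}^*$ equals the objective of \eqref{eq:reformulate}.

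Next I would argue that Algorithm~2 computes an optimal solution of \eqref{eq:reformulate}. The argument is an induction on $k$ that directly applies Lemma~\ref{Lemma_Recursive}: Lines~5--7 implement the base case $a^*(1,i)$, Lines~9--10 implement the recursion \eqref{RecursiveFunction1} for $1<k<C$, and Lines~13--14 implement the same recursion for the full capacity budget $S$ at $k=C$. Since the recursion is valid by Lemma~\ref{Lemma_Recursive}, $a^*(C,S)$ returned by the DP is the optimal value of \eqref{eq:reformulate}. The backtracking in Lines~17--23 recovers an optimizer: at each step the stored argmin $b^*(k,e)$ identifies an optimal choice of $x_k$ for the residual capacity $e$, and the residual is updated to $e - x_k^*$, which by construction is exactly the capacity for which $a^*(k-1,\cdot)$ was optimally computed. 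Hence $(x_1^*,\dots,x_C^*)$ is an optimizer of \eqref{eq:reformulate}, and together with the per-slot optima produced by Algorithm~1 it constitutes a globally optimal solution to PRACO.

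For the complexity, I would simply count the nested loops. Algorithm~1 requires $O(\max\{CH^2T, CTHR\})$ as already noted in the excerpt. In Algorithm~2, the table $a^*$ has $C(S+1)$ entries, and filling each entry requires a minimization over at most $H+1$ candidate values of $x_k$, so the DP fill phase costs $O(CSH)$; the backtracking loop is $O(C)$. Adding these together yields a total complexity polynomial in $C$, $H$, $T$, $R$, and $S$ (and recalling $S \le CH$), which establishes the polynomial-time claim.

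The main obstacle, and the only step that is not purely bookkeeping, is making the reduction argument rigorous: one must verify that the per-content optima delivered by Algorithm~1 remain optimal when they are combined under the cache-capacity constraint \eqref{const:capacity}. This is precisely where Lemma~\ref{lemma2} is essential, since it guarantees that once $x_{c1}$ is fixed, the subsequent $x_{c2},\dots,x_{cT}$ do not interact with the choices made for other contents; the cost therefore separates into a sum of per-content costs $z_c(x_c)$, and the DP over \eqref{eq:reformulate} is faithful to the original PRACO objective.
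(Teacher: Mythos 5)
Your proposal is correct and follows essentially the same route as the paper's own proof: optimality is obtained by combining Lemma~\ref{lemma2} (the per-content $z$-values from Algorithm~\ref{alg1} are optimal for each fixed initial value) with the recursion validated in Lemma~\ref{Lemma_Recursive}, and the complexity is a count of $O(\max\{CH^2T, CTHR\})$ for Algorithm~\ref{alg1} plus $O(HCS)$ for the DP, with the same observation that the cache size is effectively bounded so the total is polynomial in the input size. The extra detail you give on the backtracking step and on why the cost separates by content under the capacity constraint \eqref{const:capacity} is a harmless elaboration of what the paper leaves implicit.
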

\begin{proof}
The optimality follows from Lemma~\ref{lemma2} and the recursion of
which the correctness is established in Lemma~\ref{Lemma_Recursive}.
As for time complexity, the steps in Algorithm~\ref{alg2} together
require a complexity of $O(HCS) = O(H^2Cs)$. However, a prerequisite
is that the $z$-values are given. Computing these values with
Algorithm~\ref{alg1}, given ${\Delta}$ computed, has complexity $O(\max \{CH^2T,CTHR\})$.
Hence the overall complexity is of $O(\max \{H^2Cs, CH^2T,CTHR\})$. Finally, note that, even though $s$ is
not a parameter for input size, its values is bounded by $C$, because
otherwise the capacity constraint is redundant and the problem
decomposes by content (and solved without the need of DP).
Hence the complexity is of $O(\max \{H^2C^2, CH^2T,CTHR\})$, which is
polynomial in input size.
\end{proof}
\vspace{-3mm}
\section{Performance Evaluation}

We compare the DP algorithm to two conventional caching algorithms, i.e., random caching \cite{Optimalgeographic2015} and popular caching \cite{Videoaware2014}. Both algorithms consider contents for caching one by one. In the former, the contents are considered randomly, but with respect to the files' request probabilities; a content with higher request probabilities will be more likely selected for caching. In the latter, popular contents, i.e., contents with higher request probabilities, will be considered first. For the content under consideration, the cache decision is the number of helpers with minimum total cost.

We use a Zipf distribution with shape parameter $\gamma$ to
characterize the content request probability for any requester.
Thus, $w_{rc}=\frac{c^{-\gamma}}{\sum_{k\in
\mathcal{C}}k^{-\gamma}}, r \in {\mathcal R}$.
Same as~\cite{Shukla2017Proactive}, the time period is set to $24$ hours,
and the duration of each time slot ($\delta$) is $1$ hour. The storage cost
is simulated using $f(t)=t^2$.

Figures~\ref{fig:impactH}-\ref{fig:impactgamma} provide the results
and show the impacts of parameters $H$, $\alpha$, $s$, and $\gamma$ on
the cost, respectively. It can be seen that the cost decreases with
respect to all the mentioned parameters. This is quite expected. For
example, when $H$ increases, the requesters have more opportunity to
meet helpers, leading to lower cost.  The same conclusion can be made for cheaper storage (small $\alpha$), and higher capacity (larger $s$). For parameter $\gamma$, a higher value means more variation in the
contents' request probabilities, thus it is easier for the algorithms to
identify caching solutions such that the helpers more likely store
the requested contents.

The DP algorithm outperforms the two conventional caching algorithms.
In Figures 2-4, the improvement is significant when $H$ and $\delta$
increase and $\alpha$ decreases. For example, when $H$ increases from
$4$ to $20$, the DP algorithm outperforms the popular caching
algorithm by $13\%$ to $24\%$, and outperforms the random caching
algorithm by $27\%$ to $35\%$.  This is because the DP algorithm uses
the storage capacity of helpers optimally in comparison to the
conventional algorithms.

Recall that small $\alpha$ means low storage cost. When
$\alpha=0.01$, which is a fairly large value in the context, the optimal
strategy tends to not to store contents --
it is more preferable to download from the server.
Hence cache optimization is less relevant and the algorithms are
similar in performance.  When $\alpha$ decreases, the
difference between the performance of the DP and the other algorithms becomes apparent, as the DP algorithm uses the storage capacity optimally while the conventional algorithms are not able to accomplish this.

\vspace{-3mm}
\makeatletter
\setlength{\@fptop}{0pt}
\makeatother
\begin{figure}[ht!]
\begin{minipage}{.4\textwidth}
\centering
\includegraphics[scale=0.5]{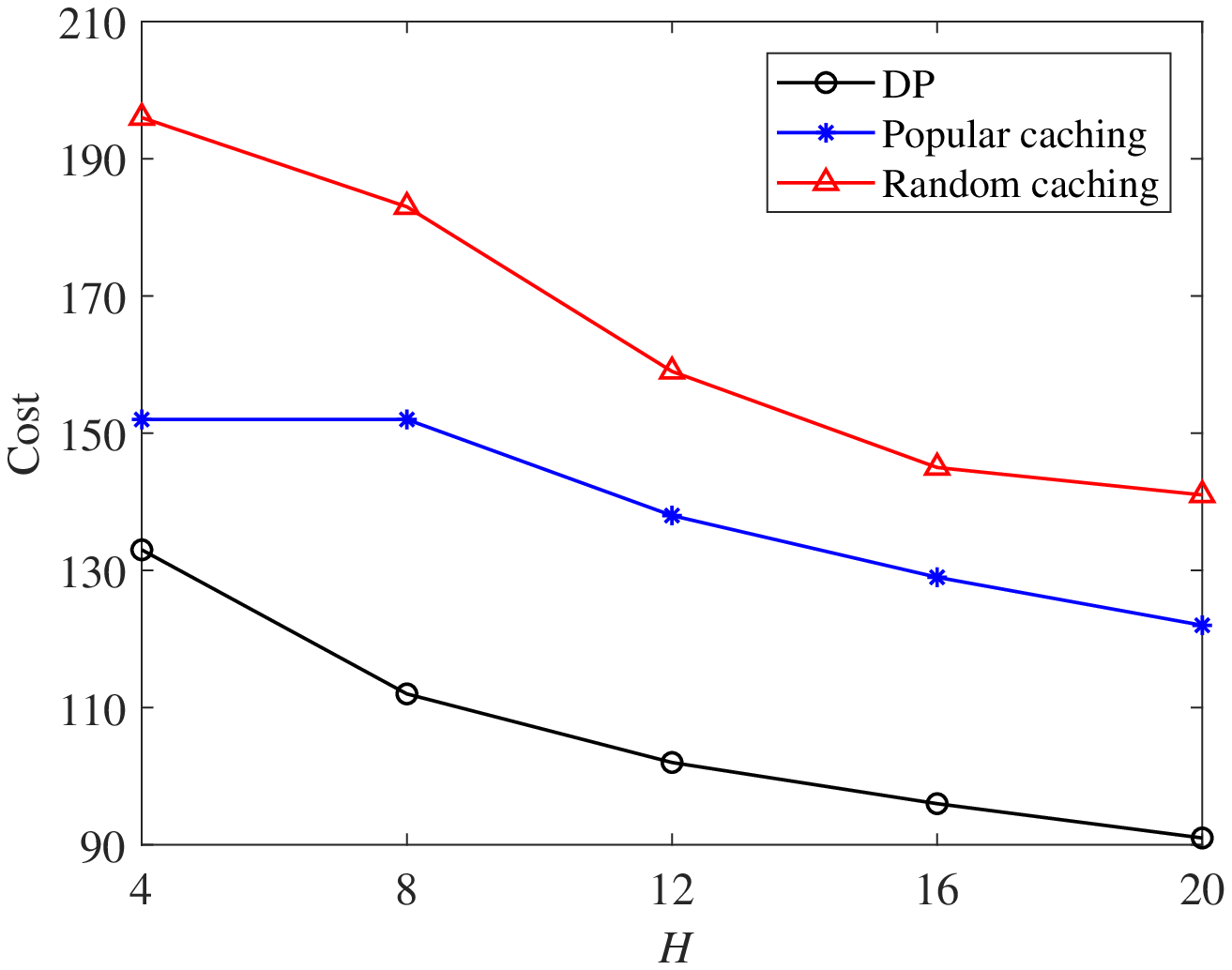}
\vspace{-8mm}
\caption{Impact of $H$ on cost when $s=4$, $C=100$, $T=24$, $\delta=1$, $R=10$, $\gamma=1$, $\lambda=1$,\text{ and} $\alpha=0.0001$.}
 \label{fig:impactH}
\end{minipage}

\begin{minipage}{.4\textwidth}
\centering
\includegraphics[scale=0.5]{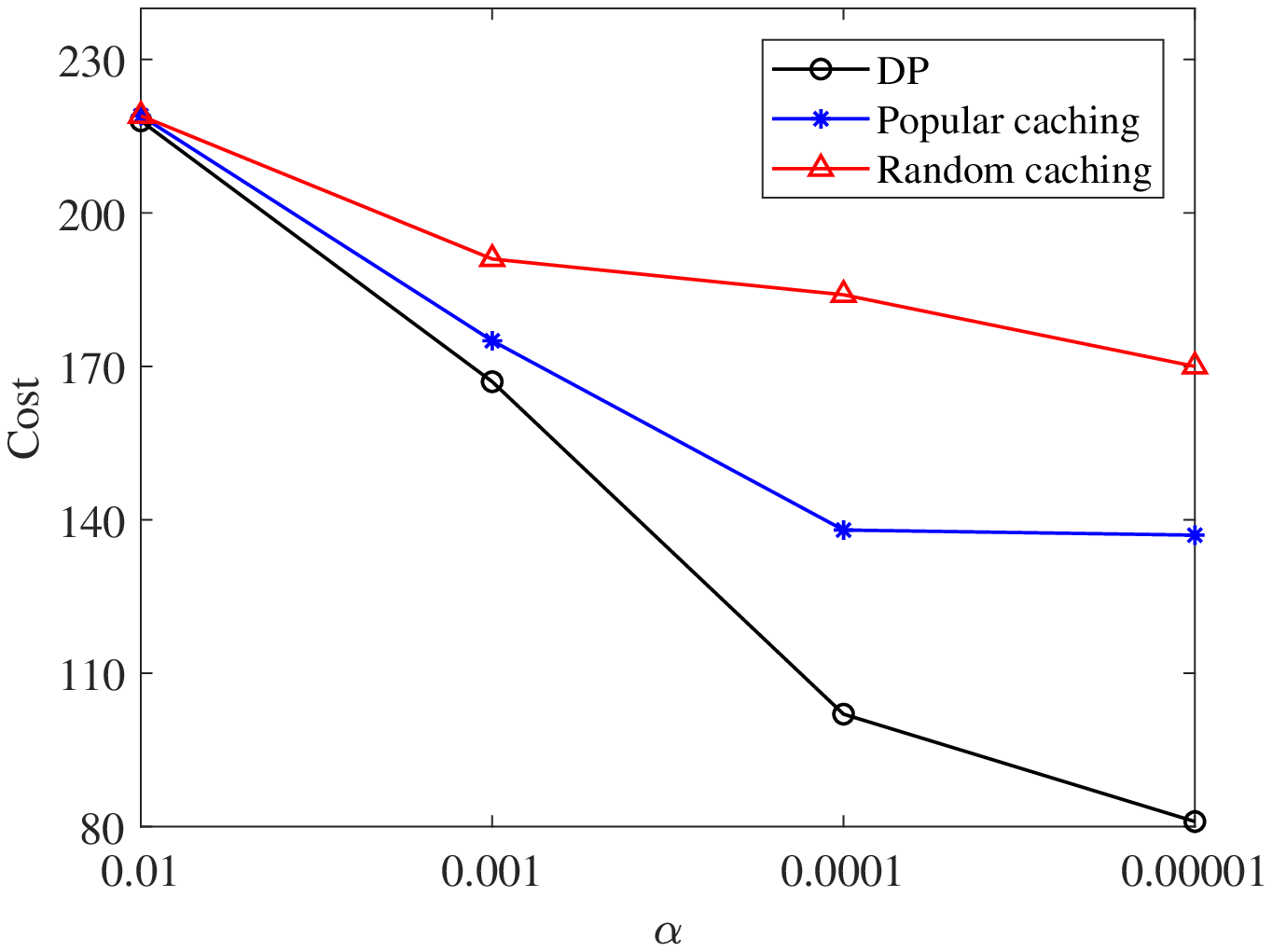}
\vspace{-8mm}
\begin{center}
  \caption{Impact of $\alpha$ on cost when $H=12$, $s=4$, $C=100$, $T=24$, $\delta=1$, $R=10$, $\lambda=1$,\text{ and} $\gamma=1$.}
  \label{fig:impactalpha}
\end{center}
\end{minipage}
\end{figure}
\vspace{-2mm}

\makeatletter
\setlength{\@fptop}{0pt}
\makeatother
\begin{figure}[ht!]
\begin{minipage}{.4\textwidth}
\centering
\includegraphics[scale=0.5]{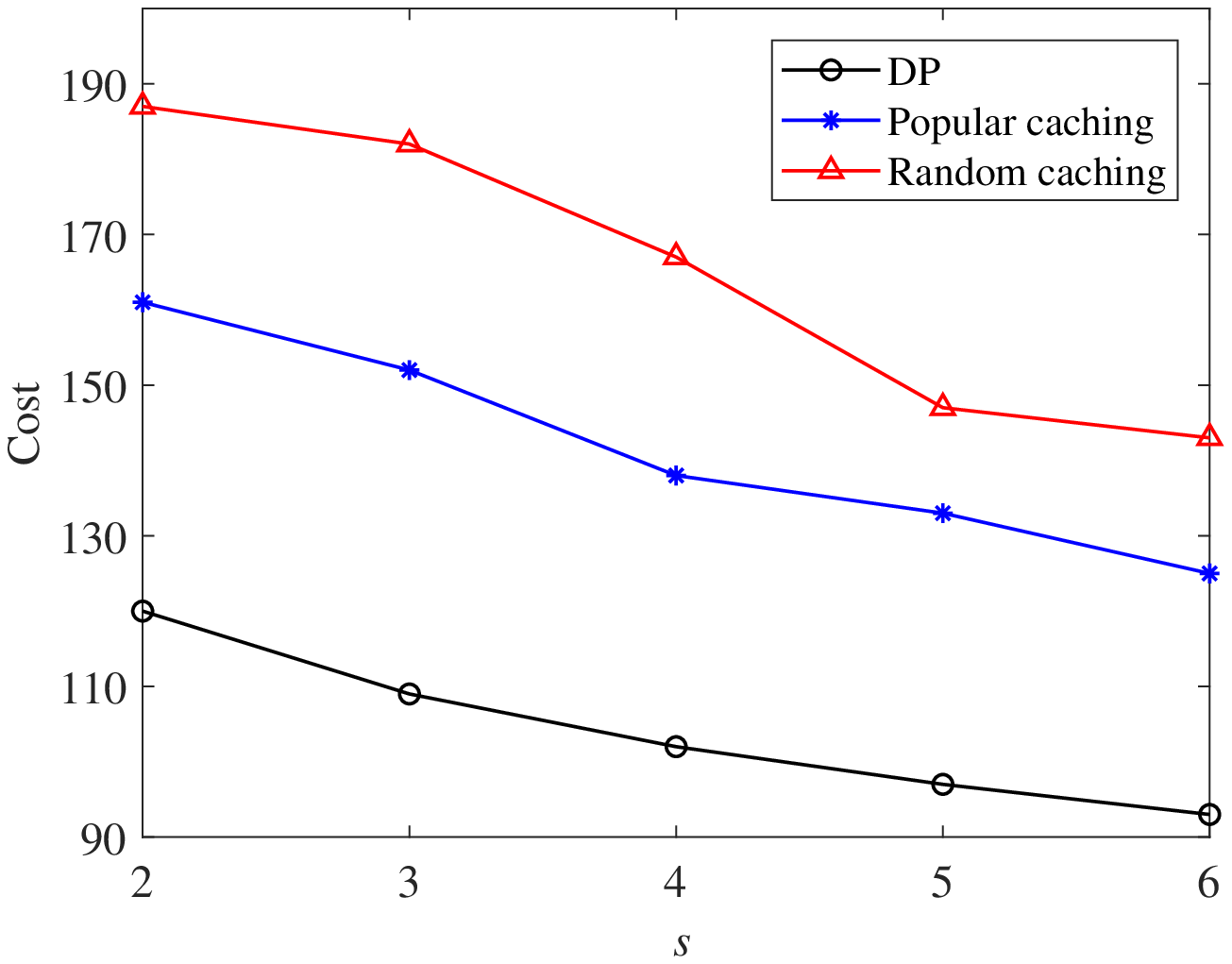}
\vspace{-8mm}
\begin{center}
  \caption{Impact of $s$ on cost when $H=12$, $C=100$, $T=24$, $\delta=1$, $R=10$, $\gamma=1$, $\lambda=1$,\text{ and} $\alpha=0.0001$.}
  \label{fig:impactcachesize}
\end{center}
\end{minipage}

\begin{minipage}{.4\textwidth}
\centering
\includegraphics[scale=0.5]{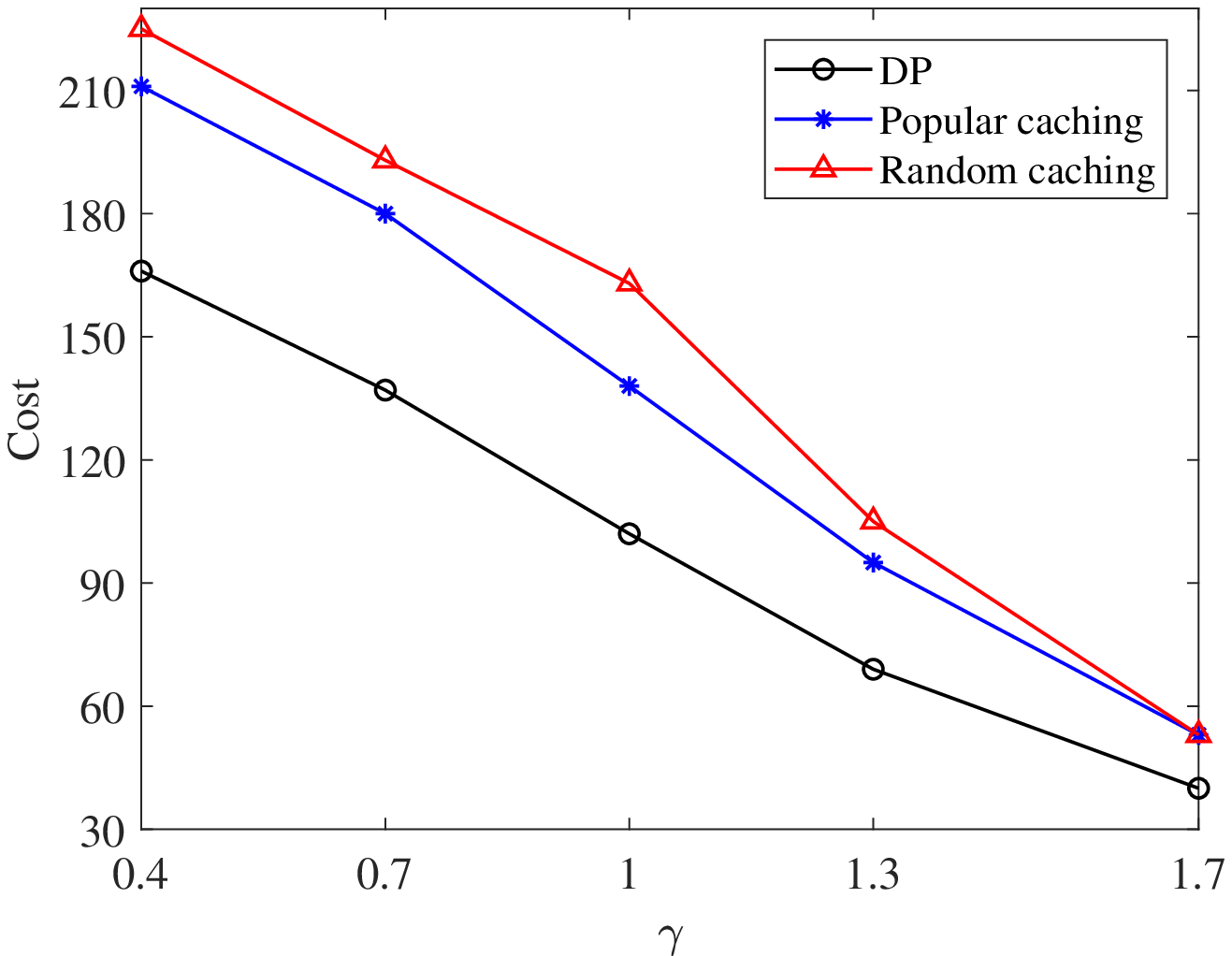}
\vspace{-8mm}
\begin{center}
  \caption{Impact of $\gamma$ on cost when $H=12$, $s=4$, $C=100$, $T=24$, $\delta=1$, $R=10$, $\lambda=1$,\text{ and} $\alpha=0.0001$.}
  \label{fig:impactgamma}
\end{center}
\end{minipage}
\end{figure}
\vspace{-5mm}

\section{Conclusions}

The paper has studied a proactive retention-aware caching problem, considering user mobility, storage cost, and cache size. We have provided analysis and algorithm development, proving that global optimum is within reach in polynomial time. Simulation results have manifested significant improvements by the proposed algorithm in comparison to two conventional caching algorithms. In our future work, we consider a more general system scenario including non-homogeneous contact rates, helpers with different cache sizes, and contents with different sizes. Thus, the problem becomes more challenging and new solution approaches need to be developed.
\bibliographystyle{IEEEtran}
\bibliography{IEEEabrv,ForIEEEBib}

 \end{document}